\documentclass[10pt,pra,superscriptaddress,twocolumn]{revtex4-2}
\usepackage[utf8]{inputenc}
\usepackage{CJKutf8}
\usepackage{multirow}
\usepackage{subcaption}
\usepackage{amsmath}
\usepackage{times}
\usepackage{booktabs}
\usepackage[dvipsnames]{xcolor}

\usepackage{tikz}
\usepackage{amssymb}
\usepackage{amsthm}
\usepackage{verbatim}

\usepackage{braket}
\usepackage{graphicx}
\usepackage{caption}
\usepackage{subcaption}
\usepackage{bbold}
\usepackage[hyperindex,breaklinks,colorlinks=true,citecolor=blue,urlcolor=blue]{hyperref}
\usepackage{hhline}

\newcommand{\cH}{\mathcal{H}}

\newcommand{\cK}{\mathcal{K}}
\newcommand{\cL}{\mathcal{L}}

\definecolor{nred}{rgb}{0.9,0.1,0.1}
\definecolor{nblack}{rgb}{0,0,0}
\definecolor{nblue}{rgb}{0.2,0.2,0.8}
\definecolor{ngreen}{rgb}{0.2,0.6,0.2}

      \newcommand{\id}{\text{I}} 
    \newcommand{\tibc}{\text{IBC}}

\newtheorem{cor}{Corollary}
\newtheorem{definition}{Definition}

\newtheorem{thm}{Theorem}
\newtheorem{Lemma}{Lemma}

\newcommand{\tr}{{\rm tr}}
\usepackage[capitalize]{cleveref}
\begin{document}
\title{Connection between the contextuality breaking and incompatibility breaking qubit channels}
\author{Swati Kumari}
\email{swati-physics@pup.ac.in}
\affiliation{Department of Physics, Indian Institute of Technology Dharwad, Dharwad, Karnataka 580011, India}
\affiliation{Department of Physics, Patna Science College, Patna University, Bihar 800005, India}
\author{Sumit Mukherjee}
\email{mukherjeesumit93@gmail.com}
\affiliation{Department of Physics, Indian Institute of Technology Dharwad, Dharwad, Karnataka 580011, India}
\affiliation{Department of Physics of Complex Systems, S. N. Bose National Center for Basic Sciences, Block JD, Sector III, Salt Lake, Kolkata 700106, India}
\author{R. Prabhu}
\email{prabhurama@iitdh.ac.in}
\affiliation{Department of Physics, Indian Institute of Technology Dharwad, Dharwad, Karnataka 580011, India}
\date{today}
\begin{abstract}
Contextuality and measurement incompatibility are two fundamental aspects of nonclassicality, and their manifestations in observed quantum correlations are often deeply interconnected. Recently, measurement incompatibility has been studied in connection with nonlocality, particularly in terms of their robustness under various quantum channels. This line of investigation helps establish a connection between the channels that break nonlocality and those that break incompatibility. In this study, we focus on an asymmetric bipartite Bell scenario involving three and four inputs on Alice’s and Bob’s sides, respectively, with each of these inputs having dichotomous outcomes. Under the assumption of locality, the observed statistics in this asymmetric scenario obeys the Elegant Bell inequality (EBI). Here, we use a different version of the EBI that relies on the assumption of the preparation noncontextuality. By taking the violation of this noncontextual version of EBI as a witness of preparation contextuality we establish a connection between the channels that break contextuality and the channels that break triple-wise measurement incompatibility. Our results suggest that any channel which breaks EBI contextuality will also break Clauser-Horne-Shimony-Holt (CHSH) nonlocality; however, the reverse does not hold. We also show that a depolarising channel that breaks N-wise incompatibility can also break a certain form of contextuality, witnessed by a generalised inequality involving N measurements on one wing of a bipartite Bell scenario.
\end{abstract}
\maketitle
\section{Introduction} 
Two of the most paradigmatic manifestations of quantum correlations, which stand in contrast to our classical worldview, are nonlocality \cite{bell1964einstein,Brunner2014bell} and contextuality \cite{contextualityReview,KochenS}.
 While the former rules out any local realistic description of quantum correlations, the later demonstrates their incompatibility with deterministic value assignments to the outcomes that are independent of the measurement context. The idea of contextuality was latter generalised to preparations and transformations of quantum systems as well \cite{spekkcontext} . Such generalization of contextuality is essentially conceptualized based on the incompatibility of quantum statistics with ontological models that admits same ontic description for equivalent operational procedures \cite{spekkcontext}. This form of contextuality is widely used in various quantum information processing protocols \cite{disSchmid,Spekkens2009,Pan2019,Saha2019,mukherjee21}, quantum computation \cite{stabilizer} and metrology \cite{lostaglio}.

 Another fundamental feature in quantum theory is measurement incompatibility—a property that prevents all observables from possessing simultaneous definite values. For the case of ideal sharp measurements \cite{nielsenchuangbook}, if the operators corresponding to two observables commute, they are said to be co-measurable or compatible. In such cases, a measurement is represented by a set of projectors. However, the most general form of a measurement is expressed through the positive operator valued measures (POVMs) \cite{Busch1986unsharp}. A set of POVMs is said to be jointly measurable (or compatible) if there exists a single global POVM whose marginals reproduce each element of the set. If no such global POVM exists, the set is incompatible \cite{incomRevModPhys.95.011003}. Measurement incompatibility is a broader concept than commutativity: while commutativity implies compatibility, the converse is not true. For projective measurements, pairwise commutativity of an arbitrary set of $N$ observables guarantees their global commutativity. In contrast, for POVMs, pairwise compatibility of $N$ measurements associated with non-commuting observables does not, in general, imply global compatibility \cite{KunjwalJM,kunjwalMS}.

Quantum channels \cite{Heinosaari_book_QF} play a pivotal role in understanding the robustness of quantum correlations like nonlocality, contextuality, as well as properties of measurements such as measurement incompatibility. A quantum channel is a completely positive trace-preserving (CPTP) map that models the interaction of a quantum system with its environment, often leading to decoherence or loss of nonclassical resources~\cite{Heinosaari_book_QF}. In this context, a \textit{nonlocality-breaking channel} (NBC) \cite{pal2015non} is one whose action on a subsystem of any bipartite system ensures that the resulting correlations admit a local hidden variable model, thereby satisfying the Clauser-Horne-Shimony-Holt (CHSH) inequality \cite{Clauser_1969}. Analogously, an \textit{incompatibility-breaking channel} (IBC) maps any given set of incompatible observables to a jointly measurable set~\cite{heinosaari2015}. Extending this idea, a \textit{contextuality-breaking channel} (CBC) is defined as one that, when acting on a subsystem of a bipartite scenario renders the statistics compatible with a preparation noncontextual model~\cite{Mukherjee_2024,GoraiAKP_2018}. Since contextuality in such settings has a one-to-one correspondence with measurement incompatibility~\cite{uola20}, studying CBCs provides an opportunity to establish a direct link between the dynamical degradation of contextuality and incompatibility under the action of quantum channels. This perspective also allow us to investigate whether channels that destroy nonlocal correlations also necessarily destroy the underlying contextuality and vice-versa.
 
 Recent works have revealed deep operational connections between different manifestations of quantum nonclassicality. For example, the measurement incompatibility is necessary and sufficient for the violation of CHSH inequality \cite{wolf2009measurements} in the bipartite scenario involving two random measurements per party. Therefore, the set of qubit unital channels that are nonlocality breaking coincides exactly with the set of incompatibility breaking channels \cite{Swati2023}. However, this equivalence breaks down when the number of input measurement settings or outputs are increased \cite{bene2018measurement,hirsch18}. In such extended configuration, nonlocality and incompatibility do not maintain such one-to-one connections, indicating a richer structure in the resource theory of quantum measurements. On the other hand contextuality, in particular, the preparation contextuality has been shown to admit a one-to-one correspondence with measurement incompatibility even in the extended operational scenarios \cite{uola20}. 
 
 We investigate the connection between contextuality-breaking channels (CBC) and incompatibility breaking channels (IBC) by extended operational scenario, where Alice and Bob performs three and four dichotomic measurements respectively. The relevant nonclassicality witness in this setting is the Elegant Bell inequality (EBI) \cite{gisin2007bell}. Unlike the CHSH inequality, the noncontextual and local bounds of Bell function corresponding to the EBI are different. Here we employ a version of the EBI whose upper bound is determined by the assumption of preparation noncontextuality. A violation of this bound serves as a sufficient, though not necessary, condition for contextuality. By examining how different classes of noisy channels influence EBI violations, we establish an operational connection between CBC and IBC~\cite{heinosaari2015}, extending the analysis to scenarios beyond CHSH.

We introduce the formal definition of an \emph{EBI-contextuality-breaking channel} (EBI-CBC) and compare its behavior with CHSH-nonlocality-breaking channels (CHSH-NBC) across several important noise models, including depolarizing, amplitude damping, loss, and dephasing channels. Our results show that every EBI-CBC is CHSH-NBC, but the converse is not true, thereby establishing a strict inclusion relation between these two classes of channels. For unital channels, we find a one-to-one correspondence between EBI-CBC and triple-wise IBC. In addition, the analysis of different channel actions on contextual correlations are supplemented with an investigation of the white-noise robustness of the resulting correlations. Finally we generalise our results by establishing the connection between $N$-wise incompatibility breaking channel ($N$-IBC) and CBC in bipartite Bell scenario involving $N$ measurement at one side. Our quantitative comparison of channel parameters of depolarizing channel across varying numbers of extended measurement settings shows that increased number of measurement settings increases the noise threshold required to destroy nonclassicality, refining our understanding of how quantum resources behave under such extensions.

This paper is organized as follows. Sec.~\ref{subsec:contx} presents the definitions for nonlocality, contextuality, NBC, CBC, and EBI. In Sec.~\ref{subsec:cbreaking}, we derive the conditions under which EBI contextuality is broken and establish its connection with certain incompatibility breaking channels. In Sec.~\ref{sec:white_robustness} we discuss the white noise robustness of contextuality. We also discuss the quantitative characteristics of different channels in terms of their ability to break EBI contextuality and CHSH nonlocality. In Sec.~\ref{subsec:gen}, we generalise our results to the case of an arbitrary number $N$ of measurements. Finally, in Sec.~\ref{sec:discussion} we discuss the implications of our findings and outline the possible future directions.
 
 \section{Preliminaries}
 \label{subsec:contx}
  \subsection{Contextuality in Bipartite Bell-type Scenario}

We briefly review the ontological model \cite{contextualit05} of an operational theory, along with the assumption of \textit{preparation noncontextuality}. An operational theory consists of a set of preparations $\{P\}$ and measurements $\{M\}$, which collectively designed to predict the experimental statistics via $p(k|P, M)$, i.e., the probability of obtaining outcome $k$ given a specific preparation and measurement. As an operational theory, quantum mechanics represents a preparation procedure by a density matrix $\rho$, while measurements are generally described by the POVMs $\{E_k\}$, satisfying $\sum_k E_k = \mathbb{I}$. The corresponding quantum probability of obtaining a particular outcome $k$ is given by the Born rule: $p(k|P, M) = \mathrm{Tr}[\rho E_k]$.

While quantum theory successfully explains most of the observed atomic and subatomic phenomena, it does not provide an objective description of the properties of a physical system. The ontological model framework \cite{Harrigan2010-cb} aims to reconcile quantum theory with such objective descriptions. However, as proved in different no-go theorems \cite{Brunner2014bell,contextualityReview}, to construct such a model we must defenestrate our beliefs about local reality or context-independent reality, leading to the emergence of nonlocal \cite{bell1964einstein} and contextual correlations. In ontological model, a preparation procedure $P$ is associated with a probability distribution $\mu(\lambda | P)$ over the ontic state space $\Lambda$, satisfying the normalization condition $\int_{\Lambda} \mu(\lambda | P) \, d\lambda = 1$ for all $P$, with $\lambda \in \Lambda$. Here, $\lambda$ represents the ontic (i.e., underlying) state of the system. Given a POVM element $E_k$ of measurement $M$, each ontic state $\lambda$ is associated with a response function $\xi(k | \lambda, M)$ satisfying $\sum_k \xi(k | \lambda, M) = 1$ for all $\lambda$.

Any ontological model consistent with quantum theory must reproduce the quantum statistics obtained via the Born rule:
\begin{equation}
\int_{\Lambda} \mu(\lambda | \rho, P) \, \xi(k | \lambda, M) \, d\lambda = \mathrm{Tr}(\rho E_k).
\label{eq:born}
\end{equation}

To introduce the notion of preparation noncontextuality, we consider operationally equivalent preparation procedures---those that cannot be distinguished by any measurement. Specifically, $P_0$ and $P_1$ are operationally equivalent if $p(k | P_0, M) = p(k | P_1, M)$ for all $M$ and $k$. In quantum mechanics, this corresponds to two distinct procedures that prepare the same density matrix $\rho$, which cannot be distinguished by any measurement.

An ontological model is said to be \textit{preparation noncontextual} if operationally equivalent preparations are represented by identical ontic state distributions ~\cite{contextualit05}, i.e., $\forall M,k,$
\begin{equation}
\label{eq:noncontextuality}
 \ p(k | P_0, M) = p(k | P_1, M) \Rightarrow \mu_{P_0}(\lambda | \rho) = \mu_{P_1}(\lambda | \rho).
\end{equation}

We refer to the violation of this assumption by quantum theory as preparation contextuality. Here we use the term \textit{(non) contextuality} to refer this form of (classical) quantum correlations.

To understand contextuality in a bipartite scenario, we consider two observers, namely Alice and Bob, who share a nontrivially correlated system and perform measurements $A_{x}$ and $B_{y}$ on their respective subsystems, with $x\in [m]$ and $y\in [n]$. Here  $[l]$ represents a set of numbers $\{1,2....l\}$. This scenario can be thought of as an entanglement assisted prepare-measure scenario, where Alice's measurement remotely prepares the local state on Bob's side via quantum steering \cite{uola20}, and Bob performs measurements on that state. However, due to no-signaling the state produced at Bob's wing for different measurements at Alice's wing will be operationally equivalent. Now, these operationally equivalent preparations can be represented by same ontic states in the ontological model as described in Eq. \eqref{eq:noncontextuality}. This essentially is the assumption of noncontextuality. Now, for $x,y\in \{1,2\}$, the joint outcome statistics of Alice and Bob in a noncontextual model obeys Bell-CHSH inequality as given bellow \cite{Clauser_1969}.
  \begin{equation}
 \label{Bell-op}
 \mid \langle \mathcal{B} \rangle \mid = \mid \langle{A}_1\otimes({B}_1 + {B}_2) \rangle + \langle{A}_2\otimes({B}_1-{B}_2) \rangle\mid \leq 2
 \end{equation}

 Two-qubit entangled state and suitable choice of incompatible observables at both wings lead to the violation of the inequality given by Eq.\eqref{Bell-op}. Here, entanglement is necessary but not sufficient for the violation of the Bell inequality; however, measurement incompatibility is both necessary and sufficient for the violation of the Bell-CHSH inequality~\cite{wolf2009measurements}.

Note that, the local bound on the operator expectation value $\mid \langle \mathcal{B} \rangle \mid $ is same as the noncontextual bound stated in Eq. \eqref{Bell-op}. In other word, for the symmetric $x,y\in \{1,2\}$ case, the noncontextual and local bound coincide. However, the situation is different for a bipartite scenario involving measurements $m \neq n,$ with $ m,n \geq 2$. In this paper we concentrate in such a bipartite scenario with $m=3$ and $n=4$.

\textit{Elegant Bell Inequality and its maximum quantum violation:}
Consider Alice and Bob perform their individual dichotomic measurements on two spatially separated nontrivially correlated qubits of observables, $A_{x}$ and $B_{y}$, with $x\in \{1,2,3\}$ and $y\in \{1,2,3,4\}$. We call this a (3422)-Bell scenario. Now, if the joint statistics of Alice and Bob obey the noncontextuality assumption then it satisfies the following inequality ~\cite{Mukherjee_2024, GoraiAKP_2018},
 \begin{align}
 \label{eq:EBI}
 \nonumber
 \langle \mathfrak{B}\rangle  &=  \langle A_{1}\otimes(B_{1}+B_{2}-B_{3}-B_{4}) \rangle \\
 \nonumber       &+ \langle A_{2}\otimes(B_{1}-B_{2}+B_{3}-B_{4}) \rangle \\
                 &+ \langle A_{3}\otimes(B_{1}-B_{2}-B_{3}+B_{4}) \rangle  \overset{\text{NC}}{\leq} 4.
 \end{align}
 
 Note that, the operator expectation value of the RHS of Eq. \eqref{eq:EBI} is bounded by $\langle \mathfrak{B}_{L} \rangle \leq 6 $ for statistics admitting locality \cite{gisinelegent}. This particular form of inequality is famously called the Elegant Bell inequality (EBI). However, in this paper we use this noncontextual version of the EBI given by Eq. \eqref{eq:EBI}, i.e., $\langle \mathfrak{B}_{NC} \rangle \leq 4 $ to establish our main results.
 
 The optimal quantum value $\mathfrak{B}_{Q}^{opt}$ of Bell functional $\mathfrak{B}_{Q}$ is obtained for the shared maximally entangled state, observables represented by mutually orthogonal vectors on the Bloch sphere at Alice's wing, and four vectors on the Poincare surface ~\cite{gisin2007bell} at Bob's wing. For example, a specific choice of observables at Alice's wing, 
 \begin{equation}\label{eq:alice_obs}
      A_1=\sigma_{z}, \ \  A_2=\sigma_{y}, \ \   A_3=\sigma_{x},
 \end{equation}
 and at Bob's wing, 
 \begin{eqnarray}\label{eq:bob_obs}
    && B_1=(\sigma_{x}-\sigma_{y}+\sigma_{z})/\sqrt{3}; \  B_2=(\sigma_{x}+\sigma_{y}-\sigma_{z})/\sqrt{3}  \\
     && B_3=(-\sigma_{x}-\sigma_{y}-\sigma_{z})/\sqrt{3}; \  B_4=(-\sigma_{x}+\sigma_{y}+\sigma_{z})/\sqrt{3},\nonumber
 \end{eqnarray}
 
  and for two-qubit maximally entangled state the maximum quantum violation of the inequality in Eq. \eqref{eq:EBI} is achieved as $\mathfrak{B}_{Q}^{opt}=4\sqrt{3}$.
 
\subsection{Measurement incompatibility}

 In quantum theory, it is generally not possible to perform all measurements jointly, that is, there exist measurements whose outcomes cannot be determined simultaneously within a single experimental setup. For the case of projective measurements, joint measurability is guaranteed when the operators representing the observables commute. However, commutativity is a sufficient condition but not a necessary one for joint measurability. For example, non-commuting observables can still admit a joint probability distribution when one considers general measurements given by positive operator-valued measures (POVMs). 
 
A quantum measurement $M \in \mathcal{M} (\mathcal{H})$ with outcome set $\Omega_{M}$ is defined as a map: $m\rightarrow M^{m}$ whose action is to assign a positive operator for every outcome $a \in \Omega_{M}$. The operator $M^{m} \in \mathcal{L} (\mathcal{H})$ is the relevant POVM corresponding to the outcome $a$ of measurement $M$ satisfying  $M^{m}\geq 0 \ \forall m$, and $\sum_{m} M^{m}=\mathbb{1}$. Here, $\mathcal{L}(\mathcal{H})$ and $\mathcal{M}\mathcal{H})$ represent the set of all linear operators  and the set of all measurements acting on the Hilbert space $\mathcal{H}$. Given the density matrix $\rho$ corresponding to a quantum state, the probability of a particular outcome $m$ of measurement $M$ is given by $p(m)= tr[M^{m}\rho]$.  Now, a set $X=\{M_{1}, M_{2},\ldots, M_{n}\}$ with $X \in \mathcal{M} (\mathcal{H})$ is said to be jointly measurable or compatible if there exists a global POVM $G^{m_{1},m_{2},\ldots,m_{n}}$ defined on the product outcome space $ \Omega = \Omega_{1} \times \Omega_{2} \times \cdots \times \Omega_{n}$, where each outcome $m_{n} \in \Omega_{n}$,  satisfying the following properties:
\begin{enumerate}
    \item  $G^{m_{1},m_{2},\cdots,m_{n}}\geq 0 \hspace*{0.3cm}\forall \  m_{1},m_{2},\cdots,m_{n}$ (Positivity)
    \item $\sum_{m_{1},m_{2},\cdots,m_{n}} G^{m_{1},m_{2},\cdots,m_{n}}=1$ (Completeness)
    \item  $M_{k}^{m_{k}}=\sum_{m_{1},\cdots, m_{k-1}, m_{k+1},\cdots,m_{n}}G^
    {m_{1},m_{2},\cdots,m_{n}}$ (Reproducibility of all the individual measurements).
\end{enumerate}
   
 Otherwise, the set of measurements is called incompatible.

 \subsection{Quantum Channels}
In general, quantum systems are inevitably subject to interactions with various types of noisy environments. These interactions can introduce disturbances that alter the system's behavior, ultimately diminishing its reliability and effectiveness in performing different quantum information processing tasks. Such interactions of the state with the environment are best described in terms of quantum channels \cite{Heinosaari_book_QF}.
 \vskip 0.1cm
 In Schrodinger picture, a quantum channel (QC) $\mathcal{E}:\cL(\mathcal{H})\rightarrow \cL(\mathcal{K})$, is a completely positive trace-preserving (CPTP) linear map that acts on a quantum state in the Hilbert space $\mathcal{H}$ and transforms it into another quantum state belonging to Hilbert space $\mathcal{K}$ \cite{Heinosaari_book_QF}. It is important to note that, in general, the dimension of the input and output Hilbert spaces may not be the same. Conventionally, the action of a quantum channel is well expressed in terms of the Krauss representation~\cite{nielsenchuangbook}, according to which the output state after the action of a quantum channel $\mathcal{E}$ is given by, $\rho^\prime =\mathcal{E}\big(\rho\big) =\sum_{j=1}^{n}K_j\rho K_j^\dagger$, where Kraus operators $K_j$ satisfies $\sum_j K^\dag_j K_j=\id$.

  In Heisenberg picture the action of $\mathcal{E}$ is denoted by $\mathcal{E}^{*}:\cL(\cK)\rightarrow\cL(\cH)$ and is defined as,

\begin{align}
\label{eq:SHpictures}
    \tr[\mathcal{E}(\rho)M^{m}]=\tr[\rho\mathcal{E}^{*}(M^{m})] \ \ \forall \rho\in\cL(\cH),M^{m}\in\cL(\cK).
\end{align}
The key insight from Eq. (\ref{eq:SHpictures}) is that the noise leading to decoherence in a state-based resource can equivalently be interpreted as a distortion of the measurement-based resource. This idea enables the formulation of fundamental connections between the properties of the states and the measurements in a more general setting. In particular, consider a bipartite scenario discussed earlier involving parties Alice and Bob sharing an entangled state $\rho_{AB}\in\cL(\cH_{A} \otimes \cH_{B})$ and performing measurements $A_{x}$ and $B_{y}$ with $x,y \in \{0,1\}$ on their respective subsystems. Then in such a scenario Eq. \eqref{eq:SHpictures} can be rewritten as,
\begin{align}
\label{eq:SHpictures1}
   & \tr[(\mathcal{E} \otimes \mathbb{I}) (\rho_{AB})A_{x}^{a_x}\otimes B_{y}^{b_y}]=\tr[\rho_{AB}(\mathcal{E}^{*}\otimes \mathbb{I})(A_{x}^{a_x}\otimes B_{y}^{b_y})] \nonumber \\
   &\forall \rho_{AB}\in\cL(\cH_{A} \otimes \cH_{B}),A_{x}^{a_x}\otimes B_{y}^{b_y}\in\cL(\cK_{A} \otimes \cK_{B})
\end{align}

Motivated by this, the relationship between the channels that break certain bipartite quantum correlations, viz., steering and nonlocality, by introducing decoherence to the state and those that break measurement incompatibility have been investigated in  \cite{heinosaari2015,Swati2023,pal2015non}.

 \begin{figure}
 \begin{align}
\centering
\includegraphics[width=85mm]{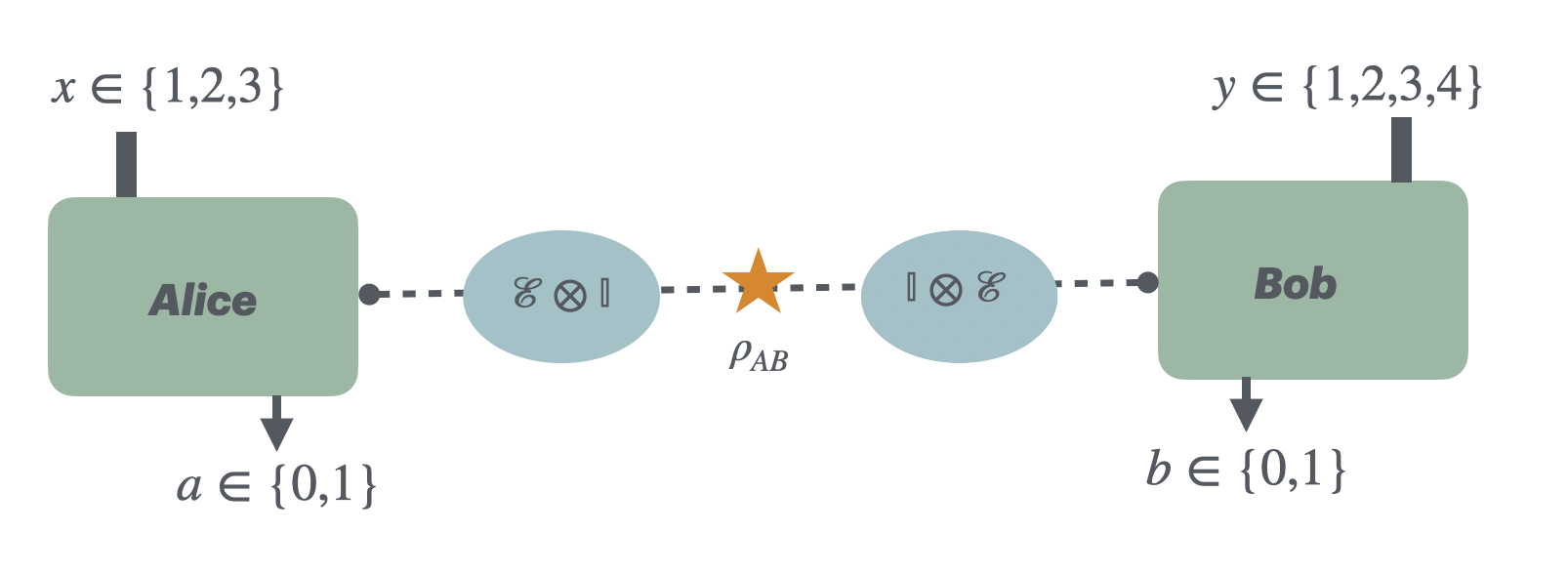}\nonumber
\end{align}
\caption{Figure depicts the action of contextuality breaking channels ($\mathcal{E}\otimes \mathbb{I}$) and ($\mathbb{I} \otimes \mathcal{E} $) on first and second qubit of the shared state $\rho_{AB}$ between Alice and Bob respectively.}
\label{NBCpicture}
\end{figure}
\vskip 0.3cm
 
\textit{Nonlocality and contextuality breaking channels:} Consider an arbitrary bipartite quantum state $\rho_{AB}$ that violates the Bell-CHSH inequality. 
If after the action of a channel on one subsystem, the resulting state 
$\rho'_{AB} = (\mathbb{I} \otimes \mathcal{E})(\rho_{AB})$ satisfies the Bell-CHSH inequality, 
then the channel $\mathcal{E}$ is said to be a nonlocality-breaking channel.
 For a state $\rho'_{AB}$ satisfying the Bell-CHSH inequality, if the POVMs used for the measurement of the subsystems of the shared bipartite system of Alice and Bob denoted as $\{E_{x}^{a_x}\}$ and $\{E_{y}^{b_x}\}$ respectively, we can always define conditional distributions $P(a|x,\lambda)$ and $P(b|y,\lambda)$ as well as a shared classical variable $\lambda$, such that the joint statistics can be written as,
 \begin{eqnarray}
 \label{eq:factor}
 p(a_{x},b_{y}|A_{x},B_{y})=& \hspace{-0.8cm}\tr\left[( E_{x}^{a_x}\otimes E_{y}^{b_y} )\rho'_{AB} \right] \nonumber \\
 =&\int d\lambda ~p(\lambda) ~P(a|x,\lambda) ~P(b|y,\lambda).
 \end{eqnarray}
 On the other hand, if the measurement statistics cannot be expressed in the factorized form of Eq.~\eqref{eq:factor}, 
then the Bell-CHSH inequalities are violated, implying that $\rho'_{AB}$ is nonlocal. The action of a nonlocality-breaking channel is illustrated in Fig.~\ref{NBCpicture}, 
where a quantum channel $\mathcal{E}$ acts on the two subsystems in a bipartite Bell scenario. It is intuitively evident that if a channel breaks the nonlocality of a maximally entangled state, then it will also break the nonlocality of any bipartite entangled state~\cite{pal2015non}.
 It is important here to note that in a bipartite scenario, having two dichotomic measurements on both sides, a nonlocality-breaking channel is always a contextuality-breaking channel. This is because, as discussed earlier, the criteria of witnessing nonlocality and contextuality are the same (as given by Eq. \eqref{Bell-op}) for such a scenario. Then the question arises is a nonlocality-breaking channel always capable of breaking the contextuality? To address this, let's first discuss what we mean by a contextuality-breaking channel.

The witness of contextuality is the same as the witness of nonlocality in a CHSH kind of bipartite scenario. However, they are not the same for the EBI kind of asymmetric scenario. We refer to a channel contextuality-breaking channel (CBC) if, when applied to the input state of a bipartite scenario, the resulting joint statistics satisfy the noncontextual version of EBI (NEBI), whereas the original state generated statistics violates it. It is also noted in several previous works \cite{Mukherjee_2024,uola20} that contextuality in such a bipartite scenario has a one-to-one correspondence with the measurement incompatibility of the measurements performed at individual wings. In this paper, we investigate whether the contextuality-breaking channels have a connection with the incompatibility-breaking channels. 
 
 \textit{Incompatibility breaking channel (IBC):}  If a quantum channel for the set of input observables $A_1, \dots, A_n$ $(n\geq2)$, produces the outputs $\mathcal{E}^*(A_1), \dots, \mathcal{E}^*(A_n)$ that are compatible, then it is said to be \textit{incompatibility breaking} \cite{heinosaari2015}.
If a channel $\mathcal{E}$ can make every set of $N$ observables compatible, it is referred to as $N$-IBC. For example,  a channel is said to be $2$-IBC  if it breaks the incompatibility of each pair in the set of observables. We are now in a position to describe our main results that establish a comparative relation between the aforementioned channels.
 \section{Results}

 \subsection{Contextuality Breaking Conditions}
 \label{subsec:cbreaking}
 To begin, we formally define CBC through the contextuality-breaking condition expressed in terms of a specific contextuality witness.
 
\begin{definition}
\label{def:NCbreaking}
    A channel is said to be contextuality breaking, if after application of the channel on one of the qubits, the joint statistics satisfies NEBI, i.e., the states $\rho_{AB}^{1}=(\mathcal{E}\otimes\mathbb{I})(\rho_{AB})$, and $\rho_{AB}^{2}=(\mathbb{I}\otimes\mathcal{E})(\rho_{AB})$ satisfy the  $ \langle \mathfrak{B}_{Q}^{opt} \rangle \leq4$, with $\mathfrak{B}_{Q}^{opt}$ is the quantum optimal value of the EBI-Bell functional.
\end{definition} 

We can now derive the contextuality-breaking condition for an arbitrary channel. In order to do this, consider the two-qubit maximally entangled state expressed in the Pauli basis,
\begin{eqnarray}
\label{statepauli}
\rho_{AB}=\frac{1}{4}(\mathbb{I}\otimes\mathbb{I}+\sigma_x\otimes\sigma_x-\sigma_y\otimes\sigma_y+\sigma_z\otimes\sigma_z)
\end{eqnarray}
The Depolarizing Channel (DPoC) applied on the Alice's qubit transforms the state in Eq.(\ref{statepauli}) as,
\begin{eqnarray}
\hspace{-0.6cm}(\mathcal{E}\otimes \mathbb{I})(\rho_{AB})=\rho_{AB}' = p\rho_{AB}+(1-p)\frac{\mathbb{I}}{2} \otimes \tr_{A}(\rho_{AB})
\end{eqnarray}

From Eq.\eqref{eq:EBI} we get the expectation value of the EBI-Bell functional to be $\langle \mathfrak{B}_{Q}^{opt} \rangle =  \max_{A_{x},B_{y}}(\tr[\rho_{AB}'\mathfrak{B}_{Q}])= 4\sqrt{3}p$.
Here, we take the same observables given in Eq. \eqref{eq:alice_obs} and Eq. \eqref{eq:bob_obs} that gives the optimal quantum violation of the $\mathfrak{B}_{Q}$ to find the threshold value of the channel parameter $p$.
Hence, from Definition \ref{def:NCbreaking}, we obtain the condition under which the channel is CBC as follows:
 \begin{eqnarray}
 \langle \mathfrak{B}_{Q}^{opt} \rangle = 4\sqrt{3}p \leq4.
 \end{eqnarray}
 which implies,
 \begin{eqnarray}
 \label{NBC}
 p\leq \frac{1}{\sqrt{3}}.
 \end{eqnarray}
Similarly, considering different types of channels, namely amplitude damping (ADC), loss (LC), and dephasing channels (DPC), we have derived the conditions on the channel parameters under which the channel behaves like a CBC. The specific Krauss representations of the channels are given in the Appendix \ref{ap:krauss}. This was obtained by applying the action of each channel on the first qubit, the second qubit, and both qubits, as summarized in the Table.~\ref{EBINBC} below. In all cases, the quantum expectation value of the EBI-Bell functional remains the same, regardless of whether the channel is applied to the first or the second qubit. On the other hand, if we apply the channel to both qubits, it decreases faster with $p$ compared to the case when it is only applied to a single subsystem, which is exactly one intuitively expect. The EBI Bell functional $\mathfrak{B}_{Q}$ is plotted as a function of the channel parameter $p$ for different channels, applied to either a single qubit or both qubits, in Fig.~\ref{fig:ebi}.

\begin{table}[h!]
\centering
\begin{tabular}{@{}lccc@{}}
\toprule
\textbf{Channel} & \textbf{First qubit} ($\mathbf{B_Q}$) & \textbf{Second qubit} ($\mathbf{B_Q}$) & \textbf{Both qubits} ($\mathbf{B_Q}$) \\
\midrule
$DPoC$   & $4\sqrt{3}p$         & $4\sqrt{3}p$         & $4\sqrt{3}p^2$            \\
$ADC$   & $\dfrac{4(2\sqrt{p} + p)}{\sqrt{3}}$ & $\dfrac{4(2\sqrt{p} + p)}{\sqrt{3}}$ & $\dfrac{4(1 + 2p^2)}{\sqrt{3}}$                \\
$LC$   & $4\sqrt{3}p$                         & $4\sqrt{3}p$                         & $4\sqrt{3}p^{2}$                              \\
$DPC$  & $\dfrac{12 - 8p}{\sqrt{3}}$          & $\dfrac{12 - 8p}{\sqrt{3}}$          & $\dfrac{12 + 8(p - 2)p}{\sqrt{3}}$           \\
\bottomrule
\end{tabular}
\caption{Maximum quantum value of EBI-Bell operator ($\mathfrak{B}_{Q}$) for different types of noise applied on the first, second, and both qubits of the state.}
\label{EBINBC}
\end{table}

 We compare the CHSH-NBC and the EBI-CBC for different channels in Table \ref{CHSHEBINBC2}. Note that CHSH-NBC has previously been studied in Ref. \cite{pal2015non,Zhang2020}. It is intuitively straightforward to see, involving a greater number of observables requires the channel to introduce more distortion into the state to break the concerned quantum correlation. We found that for all types of channels listed in the Table \label{CHSHEBINBC2}, there exists a range of channel parameters for which the channel is capable of breaking the CHSH-nonlocality but unable to break the EBI-contextuality.

\setlength{\tabcolsep}{20pt} 
\renewcommand{\arraystretch}{2} 

\begin{table}[h!]
\centering
\begin{tabular}{@{}lcc@{}}
\toprule
\textbf{Channel} & \textbf{CHSH-NBC} & \textbf{EBI-CBC} \\
\midrule
$DPoC$   & $p \leq \dfrac{1}{\sqrt{2}}$     & $p \leq \dfrac{1}{\sqrt{3}}$     \\
$ADC$   & $p \leq \dfrac{1}{2}$            & $p \leq 0.4262$                  \\
$LC$   & $p \leq \dfrac{\sqrt{5} - 1}{2}$ & $p \leq \dfrac{1}{\sqrt{3}}$     \\
$DPC$  & $p \leq \dfrac{1}{\sqrt{2}}$     & $p \leq 0.6339$                  \\
\bottomrule
\end{tabular}
\caption{Nonlocality breaking conditions (NBC) for CHSH and EBI inequalities under different noise channels.}
\label{CHSHEBINBC2}
\end{table}
  
For the depolarizing channel, in the range of the channel parameter $\frac{1}{\sqrt{3}}\leq p \leq \frac{1}{\sqrt{2}}$, the channel is CHSH-nonlocality-breaking (CHSH-NBC) but not EBI-contextuality-breaking (EBI-CBC). Similar implications holds for others channels also as summarized in Table. \ref{CHSHEBINBC2}. These implications can be traced from the general results Theorem \ref{thm:EBI-CHSH} and Corollary \ref{thm:EBI-CHSH} stated and proved bellow. But before going into those results we prove the following Lemma that is required for the proof.

\begin{figure}%
		\centering
		\subfloat[\centering ]{{\includegraphics[width=9cm]{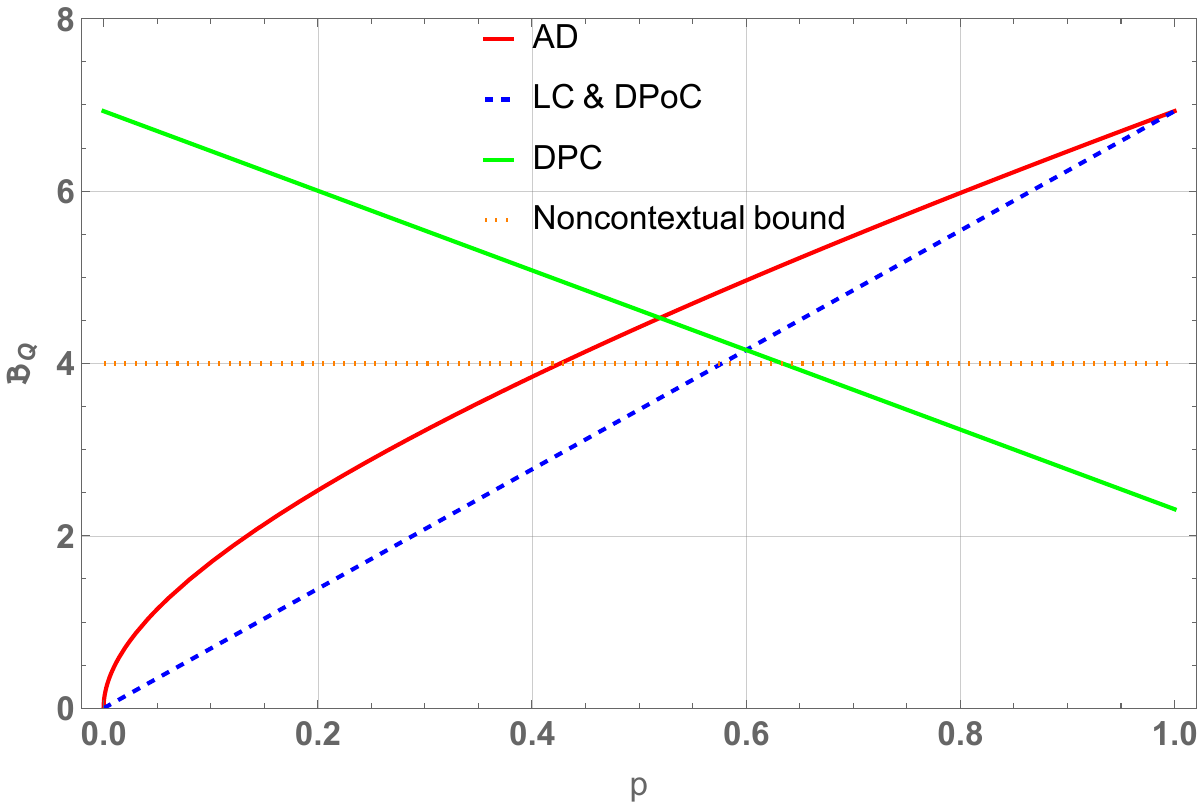} }}%
		\qquad
		\subfloat[\centering ]{{\includegraphics[width=9cm]{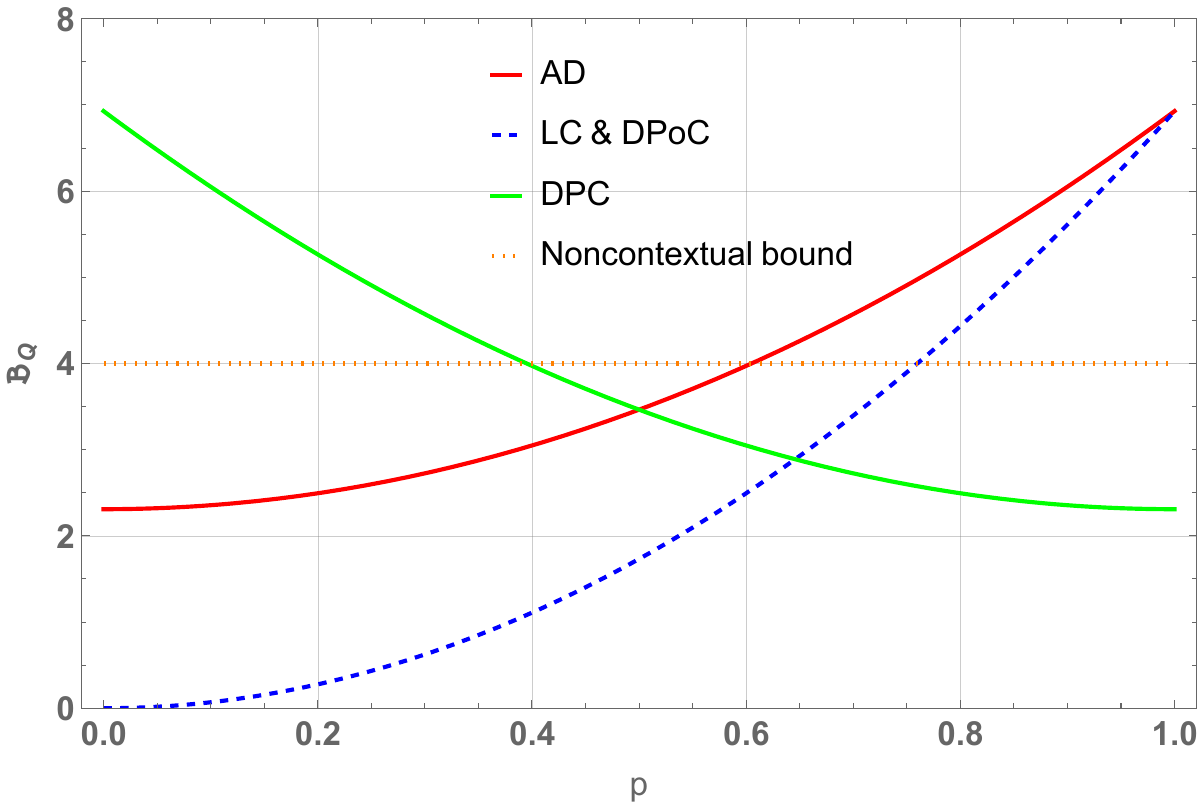} }}%
		\caption{Plot of quantum value of Bell functional $\mathfrak{B}_{Q}$ as a function of the channel parameter $p$ with the channel applied to (a) a single qubit, (b) both the qubits.}%
		\label{fig:ebi}%
	\end{figure}

\begin{Lemma}
\label{lem:povmchannel}
    Consider the POVM
\begin{equation}
E = \frac{1}{2}\big(I + \mathbf{n}\cdot\boldsymbol{\sigma}\big)
\end{equation}

with Bloch vector $\mathbf{n} \in \mathbb{R}^3$ (satisfying $|\mathbf{n}|\le 1$).  
If $\Lambda$ is a \emph{unital} qubit channel, then its Heisenberg image $ E' = \Lambda^*(E) $, has the same form
\begin{equation}
E' = \frac{1}{2}\big(I + \mathbf{n}'\cdot\boldsymbol{\sigma}\big)
\end{equation}
for some real vector $\mathbf{n}'$, with $|\mathbf{n}'| \le 1$. In other word, under the action of an unital channel the transformed POVM remain unbiased if the initial POVM is unbiased.
\end{Lemma}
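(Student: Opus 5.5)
The plan is to use two properties of $\Lambda^*$ that follow from the stated hypotheses: it is linear, and unitality of $\Lambda$ is equivalent to trace preservation of $\Lambda^*$. Since $\Lambda$ is trace preserving, Eq.~\eqref{eq:SHpictures} with $M^m=\mathbb{I}$ gives $\tr[\rho\,\Lambda^*(\mathbb{I})]=\tr[\Lambda(\rho)]=1$ for all $\rho$, so $\Lambda^*(\mathbb{I})=\mathbb{I}$. Unitality $\Lambda(\mathbb{I})=\mathbb{I}$ gives $\tr[\Lambda^*(X)]=\tr[\mathbb{I}\,\Lambda^*(X)]=\tr[\Lambda(\mathbb{I})X]=\tr[X]$, so $\Lambda^*$ is trace preserving. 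Hence $\Lambda^*(\sigma_i)$ is Hermitian, because $\Lambda^*$ is positive and therefore Hermiticity preserving, and it is traceless. On a qubit every such operator can be written $\Lambda^*(\sigma_i)=\sum_j T_{ji}\sigma_j$ with $T$ a real $3\times3$ matrix.

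Linearity then gives
\begin{equation}
E'=\Lambda^*(E)=\tfrac12\big(\Lambda^*(\mathbb{I})+\textstyle\sum_i n_i\Lambda^*(\sigma_i)\big)=\tfrac12\big(\mathbb{I}+\mathbf{n}'\cdot\boldsymbol{\sigma}\big),
\end{equation}
with $\mathbf{n}'=T\mathbf{n}$ real. This is the claimed form, and the identity coefficient $1/2$ is preserved, which is exactly the unbiasedness statement.

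It remains to show $|\mathbf{n}'|\le1$. The hypothesis $|\mathbf{n}|\le1$ is equivalent to $0\le E\le\mathbb{I}$. Because $\Lambda$ is completely positive, $\Lambda^*$ is positive, so $E'\ge0$. Also $\mathbb{I}-E'=\Lambda^*(\mathbb{I}-E)\ge0$. The eigenvalues of $E'$ are $\tfrac12(1\pm|\mathbf{n}'|)$, so $0\le E'\le\mathbb{I}$ forces $|\mathbf{n}'|\le1$.

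The argument is short, and the only step that needs care is the first one. The form of $E'$ relies on two distinct facts: $\Lambda^*(\mathbb{I})=\mathbb{I}$, which comes from $\Lambda$ being trace preserving (true for any channel), and tracelessness of $\Lambda^*(\sigma_i)$, which is where unitality is used. For a non-unital channel such as amplitude damping, the second fact fails: $\Lambda^*(\sigma_z)$ acquires an $\mathbb{I}$ component, so the effect becomes biased.
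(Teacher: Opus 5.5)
Your proof is correct and follows the same route as the paper: expand $E$ in the Pauli basis, use linearity together with $\Lambda^*(\mathbb{I})=\mathbb{I}$ and the fact that each $\Lambda^*(\sigma_i)$ is traceless and Hermitian, and read off $\mathbf{n}'=T\mathbf{n}$. You are also more careful than the paper in two places. The paper attributes $\Lambda^*(\mathbb{I})=\mathbb{I}$ to unitality, whereas it actually comes from trace preservation of $\Lambda$, and unitality is what makes $\Lambda^*$ trace preserving. The paper also bounds $|\mathbf{n}'|$ with a Bloch-ball remark about $\Lambda$, while you argue directly that $0\le\Lambda^*(E)\le\mathbb{I}$ because $\Lambda^*$ is positive.
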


\medskip

\noindent
\textbf{Proof.} Any $2\times 2$ Hermitian operator $A$ can be written in terms of the generators of $\mathbf{SU}(2)$ groups as,
\begin{equation}
 A = a_0 I + \sum_{i=1}^3 a_i \sigma_i,   
\end{equation}

where $\{\sigma_1,\sigma_2,\sigma_3\} \equiv \{\sigma_x,\sigma_y,\sigma_z\}$ are the Pauli matrices.
  
Now, by definition a channel $\Lambda$ is unital if $\Lambda(I) = I$.  
In the Heisenberg picture this implies
\[
\Lambda^*(I) = I.
\]

Since each $\sigma_i$ is traceless and Hermitian, $\Lambda^*(\sigma_i)$ is also traceless and Hermitian, and thus can be written as a real linear combination of Pauli matrices:
\[
\Lambda^*(\sigma_i) = \sum_{j=1}^3 R_{ji}\,\sigma_j,
\]
for some real $3\times 3$ matrix $R$.

Furthermore, we can write $E$ in terms of the Pauli matrices as,
\[
E = \frac{1}{2}I + \frac{1}{2}\sum_{i=1}^3 n_i \sigma_i.
\]
Then from linearity of quantum channels we have,
\begin{equation}
\begin{split}
E' &= \Lambda^*(E) 
     = \frac{1}{2}\Lambda^*(I) + \frac{1}{2} \sum_{i=1}^3 n_i\,\Lambda^*(\sigma_i) \\
   &= \frac{1}{2}I + \frac{1}{2} \sum_{i=1}^3 n_i \sum_{j=1}^3 R_{ji} \sigma_j \\
   &= \frac{1}{2}I + \frac{1}{2} \sum_{j=1}^3 \left( \sum_{i=1}^3 R_{ji}n_i \right) \sigma_j.
\end{split}
\end{equation}
Defining the new Bloch vector $\mathbf{n}'$ via $n'_j = \sum_{i=1}^3 R_{ji}n_i$ (i.e., $\mathbf{n}' = R\,\mathbf{n}$), we have
\[
E' = \frac{1}{2}\big(I + \mathbf{n}'\cdot\boldsymbol{\sigma}\big).
\]
  
It is important to note that since $\Lambda$ is CPTP and unital, it maps the Bloch ball into itself, implying $|\mathbf{n}'| \le 1$ whenever $|\mathbf{n}| \le 1$.  
Therefore $E'$ is again a valid effect.

\hfill$\square$

\medskip

\noindent
\textbf{Example.} Depolarizing channel is a unital channel. Under the action of this channel a state transforms as,
\[
\Lambda(\rho) = (1-p)\rho + p\,\frac{I}{2},
\]
we have $R = (1-p)I_3$, so $\mathbf{n}' = (1-p)\mathbf{n}$ and
\[
E' = \frac{1}{2}\big(I + (1-p)\,\mathbf{n}\cdot\boldsymbol{\sigma}\big),
\]
which is of the same form.

\begin{thm}
A unital channel $\mathcal{E}$ is Elegant Bell contextuality breaking (EBI-CBC) if and only if its dual $\mathcal{E}^*$ is 3-incompatibility-breaking (3-IBC). In other words,
\begin{equation}
\mathcal{E} \text{ is EBI-NBC } \Longleftrightarrow \mathcal{E}^* \text{ is 3-IBC}.
\end{equation}
\end{thm}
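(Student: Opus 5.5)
The plan is to reduce both sides of the equivalence to a condition on the real $3\times 3$ matrix $R$ that represents $\mathcal{E}^*$ on Bloch vectors, as in Lemma~\ref{lem:povmchannel}. I would then show that the two conditions coincide.

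\emph{Reduction.} By Lemma~\ref{lem:povmchannel}, every unbiased dichotomic observable $\mathbf{n}\cdot\boldsymbol{\sigma}$ is mapped to the unbiased observable $(R\mathbf{n})\cdot\boldsymbol{\sigma}$. Using the singular value decomposition $R=O_1\,\mathrm{diag}(\lambda_1,\lambda_2,\lambda_3)\,O_2$ (with $O_i$ rotations up to a sign), I would absorb the rotations into local unitaries and into the free choice of measurement directions. These affect neither the EBI optimum nor joint measurability. So it suffices to treat $R=\mathrm{diag}(\lambda_1,\lambda_2,\lambda_3)$ with $\lambda_i\ge 0$.

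\emph{EBI side.} Since the maximally entangled state of Eq.~\eqref{statepauli} is the optimal resource (as is argued for nonlocality breaking in \cite{pal2015non}), the optimal value after the channel is
\[
\max \sum_{x}\mathbf{a}_x\cdot T\Big(\sum_y s_{xy}\mathbf{b}_y\Big),
\]
where $T$ is the correlation matrix of $(\mathcal{E}\otimes\mathbb{I})(\rho_{AB})$, which equals $R$ up to fixed signs, and $s_{xy}=\pm1$ are the coefficients in Eq.~\eqref{eq:EBI}. I would first maximize over Alice's unit vectors, which gives $\sum_x |T\mathbf{c}_x|$ with $\mathbf{c}_x=\sum_y s_{xy}\mathbf{b}_y$. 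I would then show that the optimum over Bob's vectors is attained when the $\mathbf{c}_x$ are mutually orthogonal with $|\mathbf{c}_x|=4/\sqrt3$, which is the tetrahedral choice of Eq.~\eqref{eq:bob_obs}. This should yield $\langle\mathfrak{B}_Q^{opt}\rangle=\frac{4}{\sqrt3}(\lambda_1+\lambda_2+\lambda_3)$. As consistency checks, this reproduces the entries of Table~\ref{EBINBC} for the depolarizing channel ($4\sqrt3 p$) and the dephasing channel. Definition~\ref{def:NCbreaking} then says that $\mathcal{E}$ is EBI-CBC iff $\sum_i\lambda_i\le\sqrt3$.

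\emph{Incompatibility side.} Here I would invoke the cited one-to-one correspondence between preparation contextuality and incompatibility in this bipartite scenario \cite{uola20,Mukherjee_2024}. The direction ``$\mathcal{E}^*$ is 3-IBC $\Rightarrow$ EBI-CBC'' is immediate. If Alice's three observables become jointly measurable after $\mathcal{E}^*$, a single parent POVM remotely prepares Bob's assemblage. That yields a preparation-noncontextual model, so the bound $\le 4$ in Eq.~\eqref{eq:EBI} holds.

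For the converse I would argue by contradiction. Suppose there is a triple of observables, which by Lemma~\ref{lem:povmchannel} I may take unbiased, whose images remain incompatible. Then I would identify the worst-case triple. I expect it to be the orthogonal triple along the singular directions, with images of lengths $\lambda_i$. I would then use the known joint-measurability criterion for unbiased qubit triples to show that this triple's images stay incompatible exactly when $\sum_i\lambda_i>\sqrt3$. For orthogonal triples that criterion is $\sum_i\eta_i^2\le1$, and for the depolarizing case it gives the threshold $1/\sqrt3$, matching Eq.~\eqref{NBC}.

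\emph{Main obstacle.} The hardest step is this last matching. For a general (anisotropic) unital channel, the 3-IBC region described by $\sum\lambda_i^2\le1$-type conditions over all triples need not coincide with $\sum\lambda_i\le\sqrt3$, because the EBI is only a sufficient witness of contextuality. I would first verify the equivalence for the channels of Table~\ref{CHSHEBINBC2} and for isotropic $R$, where both thresholds agree. For the general case I would try to show that the optimal EBI configuration realizes the worst triple for 3-incompatibility. If that fails, the statement may have to be restricted to channels with $\lambda_1=\lambda_2=\lambda_3$.
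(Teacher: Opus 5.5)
The step that fails is your EBI-side formula $\langle\mathfrak{B}_Q^{opt}\rangle=\frac{4}{\sqrt3}(\lambda_1+\lambda_2+\lambda_3)$.

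Even inside your own ansatz the frame is still free. With Bob's tetrahedron, $\mathbf{c}_x=\frac{4}{\sqrt3}\mathbf{e}_x$ for an orthonormal frame $\{\mathbf{e}_x\}$, and the value is $\frac{4}{\sqrt3}\sum_x|T\mathbf{e}_x|$. Choose the frame (i.e.\ rotate the tetrahedron) so that $T^{\mathsf T}T$ has equal diagonal entries in it, which Schur--Horn allows. This gives $4\sqrt{\lambda_1^2+\lambda_2^2+\lambda_3^2}$, which exceeds your value unless all $\lambda_i$ coincide. Your agreement with Table~\ref{EBINBC} only reflects that the table evaluates the functional at fixed, axis-aligned observables.

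The dephasing channel you use as a check shows the formula cannot be right. With $R=\mathrm{diag}(1-p,1-p,1)$, your criterion declares it EBI-CBC once $3-2p\le\sqrt3$. Yet $\mathcal{E}^*(\sigma_z)=\sigma_z$ is sharp and does not commute with $\mathcal{E}^*(\sigma_x)=(1-p)\sigma_x$, so the two are incompatible for every $p<1$. The channel is therefore not even 2-IBC, and correspondingly the rotated tetrahedron gives $4\sqrt{2(1-p)^2+1}>4$. The mismatch between $\sum_i\lambda_i\le\sqrt3$ and $\sum_i\lambda_i^2\le1$ that you flag as the main obstacle is an artifact of this wrong optimum.

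The incompatibility side has the mirror problem. The orthogonal-triple criterion you quote gives $\sum_i\lambda_i^2\le1$, not $\sum_i\lambda_i\le\sqrt3$. That the singular-direction triple is the worst case is unproved. And Lemma~\ref{lem:povmchannel} does not let you assume the offending triple is unbiased; it only says unbiased effects stay unbiased. As written, the proposal covers at most the isotropic case.

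The missing idea is to avoid any scalar optimum and match the two conditions triple by triple, which is the paper's route. Push $\mathcal{E}$ onto Alice's observables via the Heisenberg picture and Lemma~\ref{lem:povmchannel}, so $\mathbf{a}'_x=R\mathbf{a}_x$. Optimize Bob for a fixed triple to get an expression in $\mathbf{d}'_y=\sum_x s_{xy}\mathbf{a}'_x$, recognize ``$\le 4$'' as the joint-measurability condition for that triple, and quantify over triples. No SVD and no worst-case triple are needed.

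You should build in one point the paper glosses over. With Bob optimized freely, as in your EBI step, one gets $\sum_y|\mathbf{d}'_y|$, and 4 is not a valid bound even for compatible triples. The completely dephasing channel is $N$-IBC for every $N$, yet $A_x=\sigma_z$, $B_1=\sigma_z$, $B_{2,3,4}=-\sigma_z$ give 6. So your ``immediate'' direction fails as stated: a parent POVM only yields a local model, whose bound is 6.

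The number 4 is a preparation-noncontextual bound only when the ensembles Bob remotely prepares obey the parity-oblivious equivalences. For this sign pattern these reduce to $\sum_y\mathbf{b}_y=0$, which the tetrahedron satisfies. Under that constraint, minimax duality gives $\max\{\sum_y\mathbf{b}_y\cdot\mathbf{d}'_y : |\mathbf{b}_y|\le 1,\ \sum_y\mathbf{b}_y=0\}=\min_{\mathbf{v}}\sum_y|\mathbf{d}'_y-\mathbf{v}|$. That is precisely the necessary and sufficient joint-measurability condition for three unbiased qubit effects, with the minimizer being the Fermat--Torricelli point. The paper's $\sum_y|\mathbf{d}'_y|\le4$ is only sufficient; it coincides with this condition when that point is the origin, e.g.\ for orthogonal triples. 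With this, the two conditions coincide triple by triple for unbiased triples, anisotropic unital channels included.
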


\begin{proof}
    
 To prove this, let us consider a channel $\mathcal{E}$ is EBI-CBC, then from Definition \ref{def:NCbreaking}, for the post-channel action quantum state $\rho_{AB}'=(\mathcal{E}\otimes\mathbb{I})(\rho_{AB})=(\mathbb{I}\otimes\mathcal{E})(\rho_{AB})$  will satisfy the following inequality,
\begin{eqnarray}
\tr[\rho_{AB}'\mathfrak{B}]\leq4
    \end{eqnarray}
    which implies,
    \begin{align}
 \label{NLI2}
\nonumber  \tr[(\mathcal{E}\otimes\mathbb{I})(\rho_{AB})(A_{1}&\otimes(B_{1}+B_{2}-B_{3}-B_{4})\\
 \nonumber +A_{2}&\otimes(B_{1}-B_{2}-B_{3}+B_{4})\\
 +A_{3}&\otimes(-B_{1}-B_{2}+B_{3}+B_{4}) ]\leq4 
 \end{align}
 Following Eq.\eqref{eq:SHpictures}, in Schrondinger picture we can write the above equation as,
  \begin{align}
 \label{NLI2}
 \nonumber \tr[\rho_{AB}(\mathcal{E^{*}}A_{1})&\otimes(B_{1}+B_{2}-B_{3}-B_{4})\\
\nonumber  +(\mathcal{E^{*}}A_{2})&\otimes(B_{1}-B_{2}-B_{3}+B_{4})\\
 +(\mathcal{E^{*}}A_{3})&\otimes(-B_{1}-B_{2}+B_{3}+B_{4}) ]\leq4. 
 \end{align}
 
 By optimizing the LHS of Eq. \eqref{NLI2} over measurements on Bob's side, and for the unbiased qubit observables $A_{x}=\frac{\mathbb{I}+\vec{a}_{x} \cdot \sigma }{2}$ on Alice's side we get,
  \begin{eqnarray}
&&|\vec{a}'_{1}+\vec{a}'_{2}+\vec{a}'_{3}|+|\vec{a}'_{1}+\vec{a}'_{2}-\vec{a}'_{3}|+|\vec{a}'_{1}-\vec{a}'_{2}-\vec{a}'_{3}| \nonumber \\
&&  \hspace{4.2cm} +|\vec{a}'_{1}-\vec{a}'_{2}+\vec{a}'_{3}|\leq4,
 \end{eqnarray}
 
 Here, we used the transformed operator $\mathcal{E}^{*}A_{x}=\tfrac{1}{2}\left(\mathbb{I}+\vec{a}'_{x}\cdot\sigma\right)$, which has already been derived in Lemma \ref{lem:povmchannel}. This is the condition for compatibility of three unbiased qubit observables \cite{kunjwalMS,KunjwalJM}. Hence, $\mathcal{E}^{*}$ is 3-$\tibc$.

 To prove the converse consider $\mathcal{E}^*$ is 3-IBC. Then, for any triple of input observables $\{A_1, A_2, A_3\}$, the outputs $\{\mathcal{E}^*(A_1), \mathcal{E}^*(A_2), \mathcal{E}^*(A_3)\}$ are jointly measurable. It is known that for the EBI scenario, a violation requires measurement incompatibility at Alice’s side \cite{Mukherjee_2024}. Since the incompatibility is broken by $\mathcal{E}^*$, no violation of EBI can occur. Therefore, for any state $\rho_{AB}$,
\begin{eqnarray}
    \langle \mathfrak{B}_{Q}^{opt} \rangle &&= \tr[(\mathcal{E} \otimes \mathbb{I})(\rho_{AB}) \cdot \mathfrak{B}_{Q}] \nonumber \\
    &&= \tr[(\rho_{AB}) \cdot (\mathcal{E^{*}} \otimes \mathbb{I})\mathfrak{B}_{Q}] \leq 4,
\end{eqnarray}

and hence $\mathcal{E}$ is EBI-CBC.
 
 \end{proof}
 
 \begin{cor}
\label{thm:EBI-CHSH}
  If a qubit channel $\mathcal{E}$ is EBI-CBC, then it is CHSH-NBC, but the converse is not true.  
\end{cor}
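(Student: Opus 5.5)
We prove the two halves separately: an inclusion argument, and a concrete channel for which the converse fails.

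\emph{Inclusion.} Suppose $\mathcal{E}$ is EBI-CBC. By the preceding Theorem (for unital channels), $\mathcal{E}^*$ is then 3-IBC: every triple of Alice's observables becomes jointly measurable after $\mathcal{E}^*$. Take any pair $\{A_1,A_2\}$ and complete it to a triple $\{A_1,A_2,A_3\}$ with an arbitrary $A_3$, say $A_3=\sigma_x$. The outputs $\mathcal{E}^*(A_1),\mathcal{E}^*(A_2),\mathcal{E}^*(A_3)$ have a global POVM $G^{a_1a_2a_3}$. Marginalising over $a_3$ gives a joint POVM for $\mathcal{E}^*(A_1),\mathcal{E}^*(A_2)$, so $\mathcal{E}^*$ is 2-IBC. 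By \cite{wolf2009measurements}, incompatibility of Alice's pair is necessary for a CHSH violation. Hence for every state $\rho_{AB}$,
\[
\tr[(\mathcal{E}\otimes\mathbb{I})(\rho_{AB})\mathcal{B}]=\tr[\rho_{AB}(\mathcal{E}^*\otimes\mathbb{I})(\mathcal{B})]\le 2,
\]
and $\mathcal{E}$ is CHSH-NBC.

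For non-unital channels the Theorem is not available. In that case I would argue directly: contextuality in the (3422) scenario requires incompatibility on Alice's side \cite{Mukherjee_2024}. Alternatively, one can compare the Bell functionals in a Horodecki-type form. Writing the correlation matrix of $(\mathcal{E}\otimes\mathbb{I})(\rho_{AB})$ as $T$ with singular values $t_1\ge t_2\ge t_3$, the CHSH maximum is $2\sqrt{t_1^2+t_2^2}$. The EBI maximum for Alice's vectors and the optimal Bob is
\[
\max_{\{\vec a_x\}}\sum_{\pm}\big|T^{\mathsf T}(\vec a_1\pm\vec a_2\pm\vec a_3)\big|,
\]
and choosing orthonormal $\vec a_x$ aligned with the singular vectors bounds this below by $\tfrac{4}{\sqrt3}\cdot$ something. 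The target inequality is that EBI-max$\le4$ forces $t_1^2+t_2^2\le1$. I expect this step to be the main obstacle, because the EBI optimisation over non-orthogonal triples is awkward. I would restrict to the unital or Table-listed channels if the general inequality resists.

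\emph{Converse fails.} Use the depolarizing channel. From Table~\ref{CHSHEBINBC2}, it is CHSH-NBC iff $p\le 1/\sqrt2$ and EBI-CBC iff $p\le1/\sqrt3$. Pick $p=0.65\in(1/\sqrt3,1/\sqrt2]$. Then the CHSH value is $2\sqrt2\,p\le2$, while the EBI value on the maximally entangled state with the observables of Eqs.~\eqref{eq:alice_obs}--\eqref{eq:bob_obs} is $4\sqrt3\,p>4$. So this channel is CHSH-NBC but not EBI-CBC. Since the depolarizing channel is unital, the counterexample lies within the scope of the Theorem, and the other channels in the Table give further examples.
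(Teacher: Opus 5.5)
Your proposal follows the paper's proof. The inclusion runs EBI-CBC $\Rightarrow$ 3-IBC via the preceding Theorem, then 3-IBC $\Rightarrow$ 2-IBC (you make the marginalisation explicit where the paper only cites it), then 2-IBC $\Rightarrow$ CHSH-NBC by Wolf et al. The counterexample is the same depolarizing channel with $1/\sqrt3<p\le 1/\sqrt2$.

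Like the paper's argument, yours only covers unital channels, but the step you flagged as the main obstacle is easy. You only need a lower bound on the EBI maximum, not the full optimisation. Write $T=UDV^{\mathsf T}$ and choose $\vec a_x=U\vec e_x$. Then each of the four vectors $T^{\mathsf T}(\pm\vec a_1\pm\vec a_2\pm\vec a_3)=VD(\pm1,\pm1,\pm1)^{\mathsf T}$ has norm $\sqrt{t_1^2+t_2^2+t_3^2}$. So an EBI value $\le 4$ forces $t_1^2+t_2^2\le t_1^2+t_2^2+t_3^2\le 1$, and by Horodecki the CHSH value is $\le 2$ for every post-channel state. This holds for any qubit channel, with no optimisation over non-orthogonal triples and no appeal to the Theorem.
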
 
 
 \begin{proof}
     It is now clear that the necessary and sufficient conditions for a channel $\mathcal{E}$ to be EBI-CBC is that its dual $\mathcal{E}^{*}$ is 3-IBC. Furthermore, as we already know that  $3-\tibc \subseteq 2-\tibc$ \cite{heinosaari2015}, then $\mathcal{E}^{*}$ is also CHSH-NBC. However, all CHSH-NBC may not be the EBI-CBC. This can be understood from the following example. In the case of the depolarizing noise, the range of channel parameter $\frac{1}{\sqrt{3}}\leq p\leq\frac{1}{\sqrt{2}}$, the channel is 2-$\tibc$ but not 3-$\tibc$. This means that the channel $\mathcal{E}$ is CHSH-NBC but not EBI-CBC. 
 \end{proof}

\subsection{White-noise robustness of contextuality under channels} \label{sec:white_robustness}

To quantify the robustness of EBI-based contextuality under a quantum channel, we introduce the notion of white-noise robustness, defined as the maximum amount of white noise that must be added to the post-channel state for it to satisfy the NEBI for all measurements. For simplicity, we assume that the state shared between Alice and Bob is a pure two qubit maximally entangled state $\ket{\Phi^+}$. Then for a post-channel state $\rho'=(\mathcal{E}\otimes\mathbb{1})(\ket{\Phi^+}\bra{\Phi^+})$, with the channel being applied to the subsystem at Alice's wing let,
\begin{equation}
\langle\mathfrak{B}(\rho')\rangle=\max_{A_x,B_y}\tr[\rho' \mathfrak{B}] 
\end{equation}

be the maximal value of the Bell functional in EBI for \(\rho'\). With the maximum noncontextual value $\langle\mathfrak{B}_{NC}\rangle=4$, we define the white-noise robustness of contextuality as,
\begin{equation}
    R_{\mathrm{WN}}(\rho')=\max\Big(0,1-\frac{\langle\mathfrak{B}_{NC}\rangle}{\langle\mathfrak{B}(\rho')\rangle}\Big)
=\max\Big(0,1-\frac{4}{\langle\mathfrak{B}(\rho')\rangle}\Big).
\end{equation}
The value of $R_{\mathrm{WN}}(\rho')$ determines the amount of noise that must be mixed with the post-channel state in order for the resulting statistics to become noncontextual. Therefore, a larger value of $R_{\mathrm{WN}}(\rho')$ indicates stronger contextuality of the correlation.
We now discuss the white-noise robustness of contextuality, quantified by $R_{\mathrm{WN}}(p)$, for the four specific channels considered in Sec.~\ref{subsec:cbreaking}, using the corresponding analytic expressions for $\langle\mathfrak{B}(\rho')\rangle$.

\paragraph{Depolarizing channel (DPoC).}
For the depolarizing channel $\langle\mathfrak{B}_{DP}(\rho') \rangle =4\sqrt{3}\,p$, hence,
\begin{equation}
    R_{\mathrm{WN}}^{\mathrm{DPoC}}(p)=\max\!\Big(0,\,1-\frac{1}{\sqrt{3}\,p}\Big).
\end{equation}
Then the robustness is nonzero only for \(p>1/\sqrt{3}\).

\paragraph{Amplitude-damping channel (ADC).}
Using the closed form expression for the quantum value of Bell functional under ADC action on Alice's qubit from Table \ref{EBINBC}, we write $
\langle\mathfrak{B}_{ADC}(\rho') \rangle=\frac{4}{\sqrt{3}}\big(2\sqrt{p}+p\big)$. Therefore,
\begin{equation}
    R_{\mathrm{WN}}^{\mathrm{AD}}(p)=\max\!\Big(0,\,1-\frac{\sqrt{3}}{2\sqrt{p}+p}\Big).
\end{equation}

The EBI-CBC breaking condition gives $2\sqrt{p}+p=\sqrt{3}$, which implies that the robustness in nonzero from a threshold value of channel parameter \(p_{\mathrm{th}}^{\mathrm{ADC}}\approx 0.4262\) ADC.

\paragraph{Loss channel (LC).}
For LC the EBI value has the same dependence as the depolarizing case as, $\langle\mathfrak{B}_{LC}(\rho') \rangle=4\sqrt{3}\,p$, hence,
\begin{equation}
  R_{\mathrm{WN}}^{\mathrm{LC}}(p)=\max\!\Big(0,\,1-\frac{1}{\sqrt{3}\,p}\Big).  
\end{equation}

Therefore, in this case the robustness is nonzero only for $p>1/\sqrt{3}$.

\paragraph{Dephasing channel (DPC).}
For the dephasing channel we have the analytic expression from Table \ref{EBINBC} as, $\langle\mathfrak{B}_{DPC}(\rho') \rangle=\frac{12-8p}{\sqrt{3}}$, and therefore the expression for the robustness takes the form,

\begin{equation}
R_{\mathrm{WN}}^{\mathrm{DPC}}(p)=\max\!\Big(0,\,1-\frac{4\sqrt{3}}{12-8p}\Big).  
\end{equation}

Then the the white-noise robustness is nonzero only for $p<p_{\mathrm{th}}^{\mathrm{DPC}} =(3-\sqrt{3})/2\approx 0.634$.

The trend of the white-noise robustness $R_{WN}(p)$ under the action of different channels as a function of channel parameter $p$ is depicted in Fig. \ref{fig:robust}. To provide a quantitative picture, the values of the robustness for specific values of $p$ are summarized in Table~\ref{tab:RWN_compact}.

\begin{table}[h!]
\centering
\renewcommand{\arraystretch}{2}
\setlength{\tabcolsep}{8pt}
\begin{tabular}{c|ccccc}
\hline
$p$ & 0.0 & 0.40 & $1/\sqrt{3}$ & 0.70 & 1.00 \\
\hline
$R_{\mathrm{WN}}^{\mathrm{DP}}$ & 0.000 & 0.000 & 0.000 & 0.175 & 0.423 \\
$R_{\mathrm{WN}}^{\mathrm{AD}}$ & 0.000 & 0.000 & 0.174 & 0.270 & 0.423 \\
$R_{\mathrm{WN}}^{\mathrm{LC}}$  & 0.000 & 0.000 & 0.000 & 0.175 & 0.423 \\
$R_{\mathrm{WN}}^{\mathrm{DPC}}$  & 0.423 & 0.213 & 0.061 & 0.000 & 0.000 \\
\hline
\end{tabular}
\caption{The table shows white-noise robustness $R_{\mathrm{WN}}(p)$ for depolarizing, 
amplitude-damping, loss, and dephasing channels for different values of $p$.}
\label{tab:RWN_compact}
\end{table}

\begin{figure}
     \includegraphics[height=50mm, width=85mm,scale=1.5]{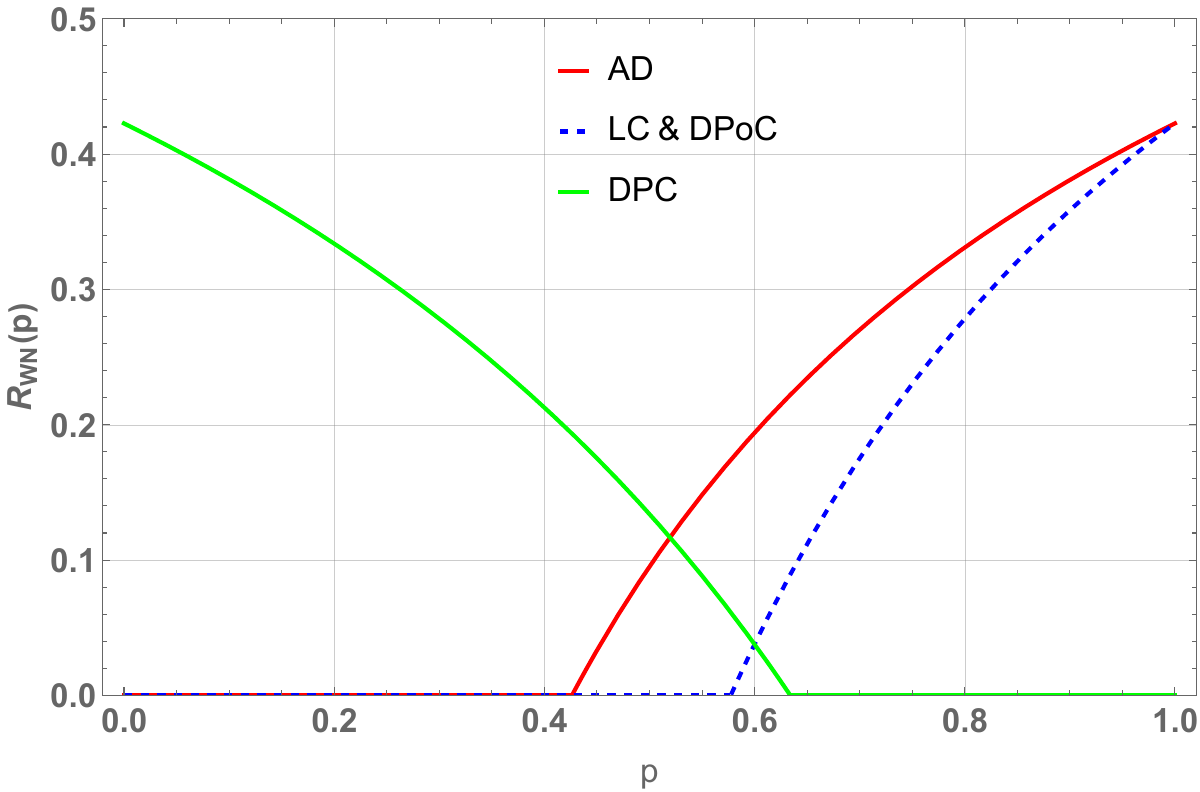}
    \caption{Plots of the white-noise robustness $R_{WN}(p)$ as a function of the channel parameter $p$ for Different Channels acting on the subsystem of Alice's wing.}
    \label{fig:robust}
\end{figure}

\subsection{Generalization to $N$-incompatibility breaking Channels}
\label{subsec:gen}
We have already established the relationship between channels that break triple-wise incompatibility and those that break contextuality in a $(3422)$-Bell scenario. While the witness of contextuality in this specific scenario is given by the EBI, its generalization \cite{Mukherjee_2024} serves as a witness for any $(n2^{n-1}22)$-Bell scenario. In such a case, violation of the following inequality acts a witness for contextual quantum correlation.

\begin{equation}\label{jgm}
    \mathcal{B}_{N}= \max_{B_{y}}\Big(\sum_{y=1}^{2^{N-1}}\lvert \sum_{x=1}^{N} (-1)^{l_{y}^{x}}\langle A_{x} \otimes B_{y} \rangle \rvert\Big) \leq 2^{N-1} 
\end{equation}
where $l_{y}^{x}$ is the $x^{th}$  bit sampled from $y^{th}$ string from the bit-string $l\in\{0,1\}^n$ having first entry $0$. The quantum optimal value of the operator $\mathcal{B}_{N}$ is given by,

\begin{align}
\label{opt}
(\mathcal{B}_{N})_{Q}^{max}= p 2^{N-1}\sqrt{N},
\end{align}

there exists operators of certain form \cite{Mukherjee_2024} for a shared maximally entangled state and $N$ number of mutually anti-commuting operators at Alice's side, such that this optimal quantum value is achieved. The connection between the contextuality breaking channels and $N$-incompatibility breaking channels can be best summarized by the following result.

\begin{thm}
    A depolarizing channel $\mathcal{E}$ with channel parameter $p$ acting on Alice's part of a shared entangled state in a Bell scenario, where Alice performs $N$ unbiased POVMs and Bob performs $2^{N-1}$ random measurements, breaks contextuality if its dual channel $\mathcal{E}^{*}$ breaks $N$-wise incompatibility.
\end{thm}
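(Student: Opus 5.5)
The plan is to mimic the proof of the (3422) theorem, now in the $(N2^{N-1}22)$ scenario. First move the channel into the Heisenberg picture on Alice's side. For a shared state $\rho_{AB}$ and the depolarizing channel $\mathcal{E}$ acting on Alice's qubit, we have
\[
\tr[(\mathcal{E}\otimes\mathbb{I})(\rho_{AB})\,A_x\otimes B_y]=\tr[\rho_{AB}\,\mathcal{E}^*(A_x)\otimes B_y].
\]
The depolarizing channel is unital and self-dual. By Lemma~\ref{lem:povmchannel}, each unbiased effect $\frac{1}{2}(\mathbb{I}+\vec a_x\cdot\sigma)$ is therefore mapped to $\frac{1}{2}(\mathbb{I}+p\,\vec a_x\cdot\sigma)$. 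So the statistics after the channel are exactly the statistics of the original state with Alice measuring the noisy observables $\mathcal{E}^*(A_x)$.

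Next, assume $\mathcal{E}^*$ is $N$-IBC. Then for every $N$-tuple $\{A_x\}$ the set $\{\mathcal{E}^*(A_x)\}_{x=1}^N$ has a parent POVM $G^{a_1\cdots a_N}$ with the $\mathcal{E}^*(A_x)$ as marginals. Write each expectation as
\[
\langle \mathcal{E}^*(A_x)\otimes B_y\rangle=\sum_{\vec a}(-1)^{a_x}\,\tr[\rho_{AB}\,G^{\vec a}\otimes B_y].
\]
Then bound the functional in Eq.~\eqref{jgm} by
\[
\mathcal{B}_N\le\sum_{\vec a}\tr\big[\rho_{AB}G^{\vec a}\otimes\mathbb{I}\big]\,\sum_{y=1}^{2^{N-1}}\Big|\sum_{x=1}^N(-1)^{l^x_y+a_x}\,\langle B_y\rangle_{\rho_B^{\vec a}}\Big|,
\]
where $\rho_B^{\vec a}$ is Bob's normalized conditional state and $|\langle B_y\rangle|\le 1$.

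For a fixed $\vec a$, the inner sum over $y$ runs over all $2^{N-1}$ sign patterns modulo a global sign. The quantity $\sum_x(-1)^{l^x_y+a_x}$ is then the coefficient sum for a shifted pattern, and its absolute value summed over all $y$ is a combinatorial constant independent of $\vec a$. This is the same argument \cite{Mukherjee_2024} uses to show that compatibility on Alice's side implies $\mathcal{B}_N\le 2^{N-1}$. I will invoke that result directly: jointly measurable Alice observables give a preparation-noncontextual model for the steered assemblage, hence $\mathcal{B}_N\le 2^{N-1}$. This gives the contextuality-breaking conclusion for all states and all Bob measurements.

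The main obstacle is checking that the noncontextual bound in \cite{Mukherjee_2024} depends only on joint measurability of Alice's effective POVMs, and not on their being sharp. If I cannot rely on that cleanly, I will give a concrete quantitative check for the depolarizing channel. The $N$-IBC threshold for unbiased qubit observables is governed by the mutually orthogonal (maximally incompatible) triple and its generalization, giving $p\le 1/\sqrt N$. Comparing this with Eq.~\eqref{opt}, $p\,2^{N-1}\sqrt N\le 2^{N-1}$ holds exactly when $p\le 1/\sqrt N$. So whenever $\mathcal{E}^*$ breaks $N$-wise incompatibility, even the optimal quantum value satisfies the noncontextual bound. The worst-case analysis needs only the maximally entangled state, since the optimal value is attained there.
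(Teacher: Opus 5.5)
Your main route has a genuine gap at the step where the combinatorial constant is identified with $2^{N-1}$. After your (valid) triangle-inequality estimate, the bound is exactly $C_N\sum_{\vec a}\tr[\rho_{AB}\,G^{\vec a}\otimes\mathbb{I}]=C_N$, where $C_N=\sum_{y}\bigl|\sum_{x}(-1)^{l^x_y}\bigr|=\tfrac12\sum_{k=0}^{N}\binom{N}{k}|N-2k|$. This equals $2^{N-1}$ only for $N=2$: for $N=3$ it is $3+1+1+1=6$, and for $N=4$ it is $12$. Joint measurability of Alice's effective POVMs gives you a local-hidden-state (hence local) model, and $C_N$ is the \emph{local} bound (the value $6$ the paper quotes for the EBI), not the noncontextual bound. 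Compatibility on Alice's side does not imply $\mathcal{B}_N\le 2^{N-1}$ for all states and Bob measurements. Take $\rho_{AB}=|00\rangle\langle00|$, $A_1=A_2=A_3=\sigma_z$ (unbiased, sharp, commuting) and $B_y=\sigma_z$; then the functional in Eq.~\eqref{jgm} equals $6>4$. So the obstacle is not sharp versus unsharp POVMs, as you suspected. The bound $2^{N-1}$ rests on extra operational-equivalence (parity-oblivious) constraints that joint measurability does not supply, so it cannot be invoked directly in the form you state, and this route has to be dropped.

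Your fallback is exactly the paper's proof, and it is what actually establishes the statement. For unbiased Alice observables the dual depolarizing channel rescales every correlator by $p$, so the post-channel value is at most $p\,2^{N-1}\sqrt N$ by Eq.~\eqref{opt}. This rescaling is also what suppresses the example above: it gives $6p\le 2\sqrt3<4$ when $N=3$ and $p\le 1/\sqrt3$. If $\mathcal{E}^*$ is $N$-IBC, it must in particular make the anticommuting family $\tfrac12(\mathbb{I}\pm p\,\Omega_x)$ compatible. By Eq.~\eqref{nec1} this forces $p\le 1/\sqrt N$, and then $p\,2^{N-1}\sqrt N\le 2^{N-1}$. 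Only this necessity direction is needed.

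One correction to your phrasing: a qubit carries at most three anticommuting observables. Moreover, $p=1/2$ already makes all unbiased qubit observables jointly measurable, while $1/\sqrt N<1/2$ for $N\ge5$. So the implication from $N$-IBC to $p\le 1/\sqrt N$ is false on a qubit for such $N$, and the argument must be run, as in the paper, with Clifford generators $\Omega_x$ in dimension $2^{\lfloor N/2\rfloor}$ rather than with ``mutually orthogonal'' qubit observables.
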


\begin{proof}
    In order to prove the theorem we first describe the necessary condition for $N$-wise  compatibility of $N$ unbiased quantum measurements of Alice. Let us take dichotomic observables $A_{x}$ where $A_{x} = \{A_{x}^{a_{x}}|A_{x}^{a_{x}} > 0; \sum_{a_{x}}A_{x}^{a_{x}}= \mathbb{I}\}$. As already discussed, $\{A_{x}\}_{x}$ are jointly measurable if there exists a global POVM $\{G_{1,2,....,N}^{a_{1},a_{2},...,a_{N}}\}_{N}$ such that,

\begin{equation}
	A_{k}^{a_{k}}=\sum_{a_{1},..a_{k-1}, a_{k+1},..a_{N}}G_{1,2,....,N}^{a_{1},a_{2},...,a_{N}}(a_{1},..a_{k}..a_{N})
\end{equation}
Following Lemma \ref{lem:povmchannel}, we can write a unbiased POVM $E_{x}^{a_{x}}$ after the action of a depolarizing channel as,

\begin{equation}
\label{eq:evpovm}
\mathcal{E}  A_{x}^{a_{x}}= \bar{A}_{x}^{a_{x}} = \frac{\mathbb{I} + p\, a_{x} \Omega_{x}}{2},
\end{equation}  
where $\Omega_{x}$ are the generators of the Clifford algebra satisfying the following,

\begin{equation}
\label{anticom}
    \forall x, x', \ \ \Omega_{x}\Omega_{x'} +  \Omega_{x'}\Omega_{x}= 2\delta_{x,x'} \mathbb {I},
\end{equation}

It has been established in ref. \cite{KunjwalJM} that, for $N$ binary POVMs of the form of Eq.\eqref{eq:evpovm} a necessary and sufficient condition for $N$-wise compatibility is  
\begin{equation}
\label{nec1}
    p \leq 1/\sqrt{N} .
\end{equation}

On the other hand, from Eq. \eqref{jgm} we can see that the necessary condition for a contextuality breaking, i.e., $(\mathcal{B}_{N})_{Q}^{max} \leq 4$ can be written as,

\begin{eqnarray}
\label{nec2}
    p && \leq \frac{2^{N-1}}{   \max_{B_{y}}\Big(\sum_{y=1}^{2^{N-1}}\lvert \sum_{x=1}^{N} (-1)^{l_{y}^{x}}\langle A_{x} \otimes B_{y} \rangle \rvert\Big)\Big) } \nonumber \\
    &&=\frac{2^{N-1}}{2^{N-1}\sqrt{N}} = \frac{1}{\sqrt{N}}
\end{eqnarray}

Therefore, a depolarizing channel $\mathcal{E}$ applied on a subsystem of a bipartite state is contextuality breaking if the dual channel $\mathcal{E}^{*}$ on the unbiased measurement acting on the same subsystem is $N$-wise incompatibility breaking.
\end{proof}

Similarly, the relation between CBC and IBC can be investigated for other relevant channels that introduce noise beyond the depolarizing noise. 

\section{Conclusive remarks and future directions}
\label{sec:discussion}

In this work, we have introduced and analyzed the concept of \emph{Elegant Bell inequality contextuality-breaking channels} (EBI-CBC) for qubits and explored their connection to \emph{incompatibility-breaking channels} (IBC). By considering an asymmetric bipartite Bell scenario with three dichotomic measurements on Alice’s side and four on Bob’s, we established a rigorous link between EBI contextuality breaking and triple-wise measurement incompatibility breaking channels. One of our main results shows that any channel that is EBI-CBC is necessarily \emph{CHSH-nonlocality-breaking} (CHSH-NBC), but the converse does not hold. This reveals a hierarchy of channel-induced correlation destruction, where breaking contextuality in more complex measurement scenarios demands a stronger disturbance than that required for nonlocality.

For specific channels including depolarizing, amplitude damping, loss, and dephasing, we determined the exact parameter regimes separating the behavior of EBI-CBC from CHSH-NBC. Notably, in the case of loss channels, we observed a one-to-one correspondence between EBI-CBC and triple-wise IBC, reinforcing the operational equivalence between contextuality and measurement incompatibility in certain noise models. These findings refine our understanding of the interplay between these two fundamental resources and their degradation under quantum noise.

From a broader perspective, our results underline that the complexity of the measurement scenario acts as a ``resource filter": channels capable of destroying simple forms of nonclassicality may fail to eliminate more demanding ones. This insight has implications for designing robust quantum information protocols where resilience against different noise types depends not just on the amount of entanglement but also on the measurement structure in particular the incompatible set of measurements.

We note that several natural extensions of the present work are worth studying in future. Firstly, generalization to higher-order incompatibility: Our Theorem~2 establishes a direct connection between EBI-CBC and triple-wise IBC by considering unbiased POVMS under the effect of depolarizing channel. Extending this to general $n$-wise incompatibility breaking channels, and exploring whether a similar equivalence holds for other families of contextuality inequalities beyond EBI, remains an open problem. Secondly, the idea of \emph{channel activation} where multiple copies or auxiliary systems restore a broken nonclassicality has been studied for CHSH nonlocality. Extending this to EBI contextuality could reveal whether contextuality can be activated more easily or with different resources than nonlocality. Lastly, since entanglement, nonlocality, and incompatibility are related but distinct resources, mapping out the overlap and separation between entanglement-breaking, nonlocality-breaking, and contextuality-breaking channels may yield a unified resource-theoretic classification. This calls for further investigation.\\

 \section{Acknowledgements}
RP acknowledges financial support from the Science and Engineering Research Board (SERB), Government of India, under the CRG/2021/008795 project grant. SM and SK acknowledge the Institute postdoctoral fellowship from IIT Dharwad. SK acknowledges Alok Kumar Pan  for his fruitful comments on the initial manuscript of this work and the hospitality during the IIT Hyderabad visit in March 2025. 
 
 \bibliographystyle{apsrev4-1}
\bibliography{Paper}

\appendix
\section{The particular forms of Krauss operators of different channels used}
\label{ap:krauss}

\textbf{Depolarizing Channel:}  
A depolarizing channel with probability $p$ replaces the state with the maximally mixed state $\frac{\mathbb{I}}{2}$ and with probability $(1-p)$ leaves it unchanged.  
The Kraus operators of the qubit depolarizing channel are:
\begin{eqnarray}
K_{0} &=& \sqrt{\frac{1+3p}{4}} \, \mathbb{I}, \quad
K_{i} = \sqrt{\frac{1-p}{4}} \, \sigma_{i}, \quad i=1,2,3
\end{eqnarray}
This form ensures the trace-preserving condition $\sum_{i=0}^3 K_i^\dagger K_i = \mathbb{I}$.

\vspace{0.5em}

\textbf{Amplitude Damping Channel:}  
The amplitude damping channel models energy dissipation, driving the qubit toward the ground state $|0\rangle$.  
Its Kraus operators are:
\begin{eqnarray}
K_{0} &=& \begin{pmatrix}
1 & 0 \\
0 & \sqrt{p}
\end{pmatrix}, \quad
K_{1} = \begin{pmatrix}
0 & \sqrt{1-p} \\
0 & 0
\end{pmatrix}
\end{eqnarray}
Here, $p$ is the probability of decay from $|1\rangle$ to $|0\rangle$.

\vspace{0.5em}

\textbf{Loss Channel:}  
A loss channel transmits the state perfectly with probability $p$ and replaces it with a fixed state $|0\rangle$ with probability $(1-p)$.  
If we model it within a two-dimensional subspace, the Kraus operators are:
\begin{eqnarray}
K_{0} &=& \sqrt{p} \, \mathbb{I}, 
K_{1} = \sqrt{\frac{1-p}{2}} \, |0\rangle\langle 1| ,K_{2} = \sqrt{\frac{1-p}{2}} \, |1\rangle\langle 1|
\end{eqnarray}
For photonic systems with an explicit vacuum mode, additional Kraus operators would be needed to include the orthogonal loss subspace.

\vspace{0.5em}

\textbf{Dephasing Channel:}  
The dephasing channel leaves populations unchanged but suppresses off-diagonal elements by a factor $(1-p)$. The map is:
\begin{equation}
\mathcal{E}(\rho) = (1-p)\rho + p \, \mathrm{diag}(\rho_{00}, \rho_{11})
\end{equation}
Its Kraus operators can be written as:
\begin{eqnarray}
K_{0} &=& \sqrt{1-p} \, \mathbb{I}, \quad
K_{1} = \sqrt{p} \, |0\rangle\langle 0|, \quad
K_{2} = \sqrt{p} \, |1\rangle\langle 1|
\end{eqnarray}

\end{document}